\newtheorem{thm}{Theorem}
\newtheorem{ex}{Example}
\newtheorem{comment}{Comment}
\newtheorem{pro}{Proposition}
\newcommand{\blind}{0}
\begin{document}

\def\spacingset#1{\renewcommand{\baselinestretch}%
{#1}\small\normalsize} \spacingset{1}


\if0\blind
{
  \title{\bf On optimal policy in the group testing with incomplete identification}
  \author{Yaakov Malinovsky
  \\
   Department of Mathematics and Statistics\\ University of Maryland, Baltimore County, Baltimore, MD 21250, USA\\
}
  \maketitle
} \fi

\if1\blind
{
  \bigskip
  \bigskip
  \bigskip
  \begin{center}
    {\LARGE\bf Title}
\end{center}
  \medskip
} \fi

\bigskip
\begin{abstract}
Consider a very large (infinite) population of items, where each item independent from the others is defective with probability
$p$, or good with probability $q=1-p$. The goal is to identify $N$ good items as quickly as possible. The following group testing policy (policy A) is considered: test items together in the groups, if the test outcome of group $i$ of size $n_i$ is negative, then accept all items in this group as good, otherwise discard the group. Then, move to the next group and continue until exact $N$ good items are found. The goal is to find an optimal testing configuration, i.e., group sizes, under policy A, such that the expected waiting time to obtain $N$ good items is minimal.
Recently, \cite{G2012} found an optimal group testing configuration under the assumptions of constant group size and $N=\infty$.
In this note, an optimal solution under policy A for finite $N$ is provided.
\end{abstract}

\noindent%
{\it Keywords:} Dynamic programming; Optimal design; Partition problem; Shur-convexity \vfill

\newpage
\doublespacing
\section{Introduction and problem formulation }
\label{se:I}
Consider a subset of $x$ items, where each item has the probability $p$ to be defective, and $q=1-p$ to be good independently from the other items. Following the accepted notation in the group testing literature, we call that model a binomial model \citep{SG1959}. A group test applied to the subset $x$ is a binary test with two possible outcomes, positive or negative.
The outcome is negative if all $x$ items are good, and the outcome is positive if at least one item among $x$ items is defective.

In $1943,$ Robert Dorfman introduced the concept of group testing based on the need to administer syphilis tests to a very large number of individuals drafted into the U.S. army during  World War II. The goal was {\it complete identification} of all drafted people. The Dorfman procedure \citep{Dorfman1943} is a two-stage procedure, where the group is tested in the first stage and if the outcome is positive, then in the second stage individual testing is performed. If the group test outcome is negative in the first stage, then all items in the group are accepted as good. In this simple procedure, the saving of time may be substantial, especially for the small values of $p$. For example, if
$p=0.01$, when compared with individual testing, the reduction in the expected number of tests is about $80\%$.

Since the Dorfman work, group testing has wide-spread applications from communication networks \citep{W1985} to DNA and blood screening \citep{Dh2006, B2016}. Until today, an optimal group testing procedure for complete identification under binomial model is unknown for $\displaystyle p<(3-5^{1/2})/2$.  For $\displaystyle p\geq(3-5^{1/2})/2$ \cite{U1960} proved that the optimal group testing procedure is an individual, one-by-one testing (at the boundary point it is an optimal). However, substantial improvements of Dorfman's procedure were obtained \citep{S1957, SG1959, H1976}. For the review and comparisons among group testing procedures under binomial model see \cite{MA2016}.

To the best of our knowledge, the {\it incomplete identification} problem was introduced by \cite{BP1990} and extended by \cite{BPP1995}.
In their model, demand $D$ of good items should be fulfilled by purchasing two kinds of items. The first kind is $100\%$ quality items with the purchasing cost $s$ per unit, and the second kind is $100q\%$ quality items with purchasing cost $c$ per unit. In addition, there is cost $K$ for each group-test regardless of the size of the tested group with the items of $100q\%$ quality. Under these constrains/assumptions, the authors found an optimal number of $100q\%$ quality to purchase (once) and an optimal group
size chosen from the purchased group, in each stage of the testing procedure. It is related to the problem we discuss here, but with different assumptions and constrains.

Consider the binomial model with a very large (infinite) population of items. The goal is to identify $N$ good items as quickly as possible.
This is an {\it incomplete identification} problem. We consider the following group testing policy (policy A): Test items together in the groups, if test outcome of the group $i$ of size $n_i$ is negative, then accept all items in this group as good, otherwise discard the group. Then, move to the next group and continue until exact $N$ good items will be found. The goal is to find an optimal testing configuration, i.e., group sizes, under policy A, such that the expected waiting time to obtain $N$ good items is minimal.

In the recent work \citep{G2012} the problem of incomplete identification was considered.
The policy A was applied under assumptions $N=\infty$ and a constant group size. The author found an optimal group size as a function of $q$. It can be explained as follows: Each time a group of size $n$ is tested, if the test outcome of the group is negative, then accept all $n$ items in this group as good, otherwise discard the group and take the next group of size $n$ and so on.
The waiting time (number of tests until first good group) is a geometric random variable with expectation $\displaystyle \frac{1}{q^n}$.
Therefore, the mean waiting time per one good item is $\displaystyle \frac{1}{nq^n}$. We want to minimize this quantity. It is equivalent to maximizing the function $\mu(n,\,q)=nq^n$, which is concave as a function of continuous variable $n$. But, since the feasible solution is an integer, the
maximizer is not necessarily unique. In the proposition below we present a slightly modified result by \cite{G2012}, which found an optimal group size as the function of $q$. We also follow the accepted notation in the group testing literature and denote $p$ as the probability to be defective, which is differ from \cite{G2012} notation.

\begin{pro}[\citep{G2012}]
Define $\displaystyle n^{**}=\frac{1}{\ln(1/q)}$.  Under policy A with the constant group size and $N=\infty$ , the optimal group size
for the $q\geq 1/2$ is
\begin{equation}
\label{eq:Gusev}
n^{*}=\left\{
\begin{array}{ccccc}
  n^{**} & if & integer \\
  \lfloor n^{**}\rfloor & if  & \mu(\lfloor n^{**}\rfloor,\,q) > \mu(\lceil n^{**}\rceil,\,q) \\
  \lceil n^{**}\rceil & if & \mu(\lfloor n^{**}\rfloor,\,q) < \mu(\lceil n^{**}\rceil,\,q)\\
   \lfloor n^{**}\rfloor \,\, or\,\, \lceil n^{**}\rceil& if & \mu(\lfloor n^{**}\rfloor,\,q) = \mu(\lceil n^{**}\rceil,\,q),
\end{array}
\right.
\end{equation}
where $\displaystyle \lfloor x\rfloor \left(\displaystyle \lceil x \rceil\right)$ for $x>0$ is defined as the largest (smallest) integer, which is smaller (larger) or equal to $x$. For $q <1/2$, the optimal group size $n^{*}$ equals $1$.
\end{pro}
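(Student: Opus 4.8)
The plan is to reduce the stated policy problem to a one-dimensional integer optimization and then exploit the unimodality of the objective. Since each group of size $n$ passes (tests negative) independently with probability $q^n$, the number of tests until the first accepted group is geometric with mean $1/q^n$, and each accepted group contributes exactly $n$ good items. Hence the expected waiting time per good item equals $1/(nq^n)$, and minimizing it is equivalent to maximizing $\mu(n,q)=nq^n$ over the positive integers. All subsequent work concerns this single function.

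First I would study the continuous relaxation $n>0$. Differentiating gives $\mu'(n)=q^n\bigl(1+n\ln q\bigr)$, whose unique zero is $n^{**}=1/\ln(1/q)$; since $\ln q<0$, the factor $1+n\ln q$ is positive for $n<n^{**}$ and negative for $n>n^{**}$. Thus $\mu(\cdot,q)$ is strictly increasing on $(0,n^{**})$ and strictly decreasing on $(n^{**},\infty)$, i.e.\ strictly unimodal with unique maximizer $n^{**}$. I would emphasize that the property actually used is this unimodality (quasi-concavity), not global concavity; indeed $\mu''$ changes sign at $2n^{**}$, so $\mu$ is convex for large $n$, but that region is irrelevant to the maximization.

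Next, for $q\ge 1/2$ I would pass from the continuum to the integers. Unimodality forces the integer maximizer to be one of the two integers bracketing $n^{**}$. Because $q\ge 1/2$ gives $n^{**}\ge 1/\ln 2>1$, both $\lfloor n^{**}\rfloor$ and $\lceil n^{**}\rceil$ are feasible group sizes. If $n^{**}$ is itself an integer it is the unique maximizer; otherwise strict unimodality shows the maximum is attained at whichever of $\lfloor n^{**}\rfloor,\lceil n^{**}\rceil$ gives the larger value of $\mu$, with either choice optimal when the two values coincide. This is exactly \eqref{eq:Gusev}.

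Finally I would treat $q<1/2$ directly. Here $\ln(1/q)>\ln 2$, so $n^{**}<1/\ln 2<2$; together with the constraint $n\ge 1$, the only candidates are $n=1$ and $n=2$ (all larger $n$ are killed by the decreasing branch, since there $n>n^{**}$). Comparing $\mu(1,q)=q$ with $\mu(2,q)=2q^2$ gives $\mu(1,q)>\mu(2,q)\iff q<1/2$, so $n^{*}=1$. The main obstacle is not any single computation but the bookkeeping at the boundary and the threshold: one must argue via unimodality rather than the (false) global concavity, keep the feasibility restriction $n\ge 1$ in force when $n^{**}\le 1$, and handle ties and the exact value $q=1/2$ with care.
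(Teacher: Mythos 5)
Your proof is correct, and its skeleton matches the paper's own treatment: the same reduction of policy A with constant group size to a geometric waiting time with mean $1/q^{n}$, hence an expected waiting time of $1/(nq^{n})$ per good item, followed by maximization of $\mu(n,q)=nq^{n}$ over the positive integers. Where you genuinely depart from the paper --- and improve on it --- is the justification of the rounding step. The paper (which cites the result from \citep{G2012} and gives only an informal explanation rather than a proof) asserts that $\mu(\cdot,q)$ is \emph{concave} as a function of the continuous variable $n$, and lets the bracketing of the integer maximizer between $\lfloor n^{**}\rfloor$ and $\lceil n^{**}\rceil$ rest on that claim. As you observe, global concavity is false: $\mu''(n)=q^{n}\ln q\,(2+n\ln q)$ changes sign at $n=2n^{**}$, so $\mu$ is convex (though still decreasing) for $n>2n^{**}$. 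Your sign analysis of $\mu'(n)=q^{n}(1+n\ln q)$, giving strict unimodality with unique continuous maximizer $n^{**}$, is the property that actually carries the conclusion, and it suffices: increasing on $(0,n^{**})$ and decreasing on $(n^{**},\infty)$ forces any integer maximizer into $\{\lfloor n^{**}\rfloor,\lceil n^{**}\rceil\}$, with ties handled exactly as in \eqref{eq:Gusev}. You also supply the $q<1/2$ case explicitly (only $n=1$ and $n=2$ survive unimodality, and $\mu(1,q)>\mu(2,q)\iff q<1/2$, with the tie at $q=1/2$), which the paper states in the proposition but never argues. In short: same reduction, but your unimodality argument repairs a flawed concavity claim in the paper's exposition and fills in the sub-threshold case.
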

\bigskip

\begin{comment}{(Cut-off point)}\\
There is an analogy of Ungar's cut-off point for the complete identification.
It seems that for $N=2$, the policy $A$ with the groups of size 2 is the only reasonable policy for an incomplete identification problem.
For $N=2$, policy $A$ is better than the individual testing if the expected waiting time $\displaystyle 1/q^2$ is less than the expected waiting time
$2/q$ under individual testing, i.e., $q>1/2$.
Now, following \cite{U1960} with adoption to incomplete identification case, one can show that if $q <1/2$, then individual testing is the optimal among all possible strategies for any $N$.
In the boundary case $q=1/2$, the individual testing is an optimal strategy.
\end{comment}

\newpage
\noindent
{\bf The problem formulation: Finite $N$}

Under policy $A$, we are interested in finding an optimal partition $\displaystyle \left\{m_1,\ldots,m_{J}\right\}$ with $\displaystyle m_1+\ldots+m_{J}$  for some $\displaystyle J\in\left\{1,\ldots,N\right\}$ such that the expected total waiting time to obtain $N$ good items is minimal, i.e.,
\begin{align}
\label{eq:Goal}
&
\displaystyle \left\{m_1,\ldots m_J\right\}=\arg\min_{n_1,\ldots,n_I}
\left\{\frac{1}{q^{n_1}}+\ldots+\frac{1}{q^{n_{I}}}\right\},\nonumber\\
&
\text{subject to}\,\,\,\,\,\, \sum_{i=1}^{I}n_i=N,\,\,\, I\in\left\{1,\ldots,N\right\}.
\end{align}

\section{Dynamic Programming algorithm and alternative efficient solutions}

Denote $n\,\left(n=1,\ldots,N\right)$ as a number of good items remains yet unidentified and $H(n)$ an optimal total expected time to obtain $n$
good items.
Then, if we test a group of size $x\,(x=1,\ldots, n)$, we have
\begin{equation}
\label{eq:Inf}
H(n)=q^{x}H(n-x)+(1-q^x)H(n),\,\,n=2,\ldots,N;\,\,\,x=1,\ldots,n,
\end{equation}
where $H(0)=0,\, H(1)=1.$

Combining $H(n)$ from the left and right-hand side of \eqref{eq:Inf} we obtain the dynamic programming (DP) algorithm:

\begin{align}
\label{eq:DP}
&
H(0)=0, H(1)=1,\\\nonumber
&
H(n)=\min_{x=1,\ldots,n}\left\{\frac{1}{q^x}+H(n-x)\right\},\,\,\,\,n=2,\ldots,N.
\end{align}

The complexity of calculation of $H(N)$ is $O(N^2)$.

We present below two examples that help to illustrate how subgroup sizes may differ.
\begin{ex}
$N=250,\,p=0.01$. In this case $n^{*}=99$ or $100$.
An optimal DP algorithm for the problem $A$ gives us the unique (until permutation) solution:\\
$n_1=83,\,n_2=83,\, n_3=84$ with expected waiting time equals to $6.9320$.
\end{ex}

\begin{ex}
$N=220,\,p=0.01$, then the optimal solution is $n_1=n_2=110$ with expected waiting time $6.0417$.
\end{ex}

Both examples provide insight on the possibility that under an optimal policy $A$, subgroup sizes differ at most by
one unit.
In the following proposition we prove this conjecture.
This result also allows us to reduce the computational complexity of $H(N)$ from $O(N^2)$ to $O(N)$.

\begin{pro}
\label{re:Major}
For the given $\displaystyle I,\,n_1,\ldots,n_I$ exist  $\displaystyle n^{*}_1,\ldots,n^{*}_I$ with $\displaystyle |n^{*}_i-n^{*}_j|\leq 1,\,\,\text{for all}\,\,\,i,j=1,\ldots,I$, and $\displaystyle \sum_{i=1}^{I}n_i=\sum_{i=1}^{I}n^{*}_i$ such that
\begin{equation}
\label{eq:E}
\displaystyle
{
\frac{I}{q^{{\overline{n}}_I}} \leq \frac{1}{q^{n^{*}_1}}+\ldots+\frac{1}{q^{n^{*}_{I}}} \leq \frac{1}{q^{n_1}}+\ldots+\frac{1}{q^{n_{I}}},
}
\end{equation}
where $\displaystyle {\overline{n}}_I=\frac{n_1+\ldots+n_I}{I}.$
\end{pro}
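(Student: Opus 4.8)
The plan is to exploit that the per-group cost $f(x)=q^{-x}=e^{x\ln(1/q)}$ is a strictly convex function of $x$ on $\mathbb{R}$, since $f''(x)=(\ln(1/q))^2\,q^{-x}>0$ for $0<q<1$. Consequently the objective $\phi(n_1,\ldots,n_I)=\sum_{i=1}^I q^{-n_i}$ is a symmetric, separable sum of a convex function, hence Schur-convex. The two inequalities in \eqref{eq:E} have distinct sources: the left one is Jensen's inequality (convexity alone), while the right one is a majorization/smoothing statement (Schur-convexity). I would prove them separately.

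For the left inequality I would apply Jensen's inequality to the convex $f$ at the points $n_1^*,\ldots,n_I^*$ with equal weights $1/I$, giving $\frac{1}{I}\sum_{i=1}^I q^{-n_i^*}\ge q^{-\overline{n}_I}$, where $\overline{n}_I=\frac1I\sum_i n_i^*=\frac1I\sum_i n_i$. Multiplying by $I$ yields $\frac{I}{q^{\overline{n}_I}}\le \sum_{i=1}^I q^{-n_i^*}$. This bound is in fact valid for every admissible configuration, representing the continuous relaxation lower bound, with equality exactly when all parts equal $\overline{n}_I$ (which requires $I\mid N$).

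For the right inequality I would give a self-contained smoothing argument that simultaneously produces the balanced vector $\{n_1^*,\ldots,n_I^*\}$. Starting from $\{n_1,\ldots,n_I\}$, as long as two coordinates satisfy $n_i-n_j\ge 2$, replace the pair $(n_i,n_j)$ by $(n_i-1,\,n_j+1)$. Because $f$ is convex, its unit increments $f(x+1)-f(x)$ are nondecreasing in $x$, and since $n_i-1\ge n_j+1$ this gives $f(n_i-1)+f(n_j+1)\le f(n_i)+f(n_j)$, so each transfer does not increase $\phi$ (and strictly decreases it by strict convexity). The sum $\sum_i n_i$ is preserved, while $\sum_i n_i^2$ strictly drops by $2(n_i-n_j-1)\ge 2$ at each step; being a nonnegative integer it cannot decrease forever, so the process terminates. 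At termination no two parts differ by more than one unit, i.e.\ $|n_i^*-n_j^*|\le 1$ for all $i,j$, which delivers $\sum_i q^{-n_i^*}\le \sum_i q^{-n_i}$ and completes \eqref{eq:E}.

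The main obstacle is the integrality constraint: in the continuous problem the minimizer is simply the equal-parts vector $(\overline{n}_I,\ldots,\overline{n}_I)$, but in general $\overline{n}_I\notin\mathbb{Z}$, so one must identify and justify the correct integer configuration rather than invoke the equal-split directly. The smoothing argument handles this cleanly without citing a majorization theorem as a black box, but one must verify that the terminal vector is genuinely the balanced one, with every part equal to $\lfloor \overline{n}_I\rfloor$ or $\lceil \overline{n}_I\rceil$, and that it is reached from any starting configuration with the same $I$ and sum; this confirms that the balanced partition is the unique optimum up to permutation.
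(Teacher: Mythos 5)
Your proof is correct, but it takes a more elementary and self-contained route than the paper. The paper's own proof is a one-liner: it observes that $f(x_1,\ldots,x_I)=q^{-x_1}+\cdots+q^{-x_I}$ is Schur-convex (as a separable sum of the convex function $q^{-x}$), asserts the majorization chain $\left(\overline{n}_I,\ldots,\overline{n}_I\right)\prec\left(n^{*}_1,\ldots,n^{*}_I\right)\prec\left(n_1,\ldots,n_I\right)$, and cites majorization theory to conclude both inequalities at once. Your decomposition replaces these citations with direct arguments: the left inequality via Jensen (which is exactly the first link of the paper's chain), and the right inequality via the transfer/smoothing step $(n_i,n_j)\mapsto(n_i-1,n_j+1)$ with termination controlled by the strictly decreasing integer invariant $\sum_i n_i^2$ --- which is precisely the constructive proof that separable convex sums decrease under such transfers, i.e.\ of Schur-convexity itself. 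What your version buys is worth noting: the paper's middle majorization $\left(n^{*}_1,\ldots,n^{*}_I\right)\prec\left(n_1,\ldots,n_I\right)$ is asserted without proof, and justifying it for arbitrary integer vectors is essentially the content of your smoothing argument; your proof simultaneously \emph{constructs} the balanced vector and establishes the inequality, so it fills that gap and needs no external reference. What the paper's approach buys is brevity and a framework that generalizes immediately to any Schur-convex objective. One small simplification to your final paragraph: the uniqueness concern is moot for the statement as given, since any integer vector with fixed sum whose parts differ pairwise by at most one is automatically, up to permutation, the $\lfloor\overline{n}_I\rfloor$/$\lceil\overline{n}_I\rceil$ split; the proposition only requires existence, which your termination argument already delivers.
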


\begin{proof}
The function $\displaystyle f\left(x_1,\ldots,x_I\right)=\frac{1}{q^{x_1}}+\ldots+\frac{1}{q^{x_{I}}}$ is Shur-convex on the finite support as the function of continuous variables $x_1,\ldots,x_I$.
Since $\displaystyle{ \left({\overline{n}}_I,\ldots {\overline{n}}_I\right) \prec \left(n^{*}_1,\ldots,n^{*}_I\right) \prec \displaystyle \left(n_1,\ldots,n_I\right)}$, where `$\displaystyle \prec$' denotes majorization (see \cite{S2004}), from Shur-convexity of $\displaystyle f\left(x_1,\ldots,x_I\right)$ we obtain $\displaystyle{ f\left({\overline{n}}_I,\ldots {\overline{n}}_I\right)\leq f\left(n^{*}_1,\ldots,n^{*}_I\right) \leq \displaystyle f\left(n_1,\ldots,n_I\right)}$, and Proposition \ref{re:Major} follows.
\end{proof}

We can use Proposition \ref{re:Major} to solve problem \eqref{eq:Goal} with computational complexity $O(N)$ in the following way.
Fix $I=1,\ldots,N$. If $\displaystyle {\overline{n}}_I=N/I$ is an integer then, due to \eqref{eq:E} it follows that for the fixed number of groups $I$ it is optimal solution, which minimizes the expected total time. Otherwise, any partition with
 $\displaystyle |n^{*}_i-n^{*}_j|\leq 1,\,\,\text{for all}\,\,\,i,j=1,\ldots,I$, and $\displaystyle \sum_{i=1}^{I}n^{*}_i=N$ is an optimal.
As such, we have to repeat the algorithm for $I=1, I=2, \ldots, I=N$, the computational complexity is $O(N)$.

Finally, in the spirit of \cite{G1985}, we present an optimal solution for the optimization problem \eqref{eq:Goal}.
The solution uses the value of $\displaystyle n^{*}$ and allows us to reduce the optimization problem \eqref{eq:Goal}
in such a way that we have to consider only up to the two partitions. Then, we need to evaluate the expected total waiting time for each of these partitions and choose one with the minimal expectation.

\begin{thm}
\label{re:Main}
Suppose $q>1/2$.
In the case of non-unique solution in \eqref{eq:Gusev}, choose $n^{*}=\lfloor n^{**}\rfloor$.
Denote $\displaystyle s=\Big \lfloor\frac{N}{n^{*}}\Big \rfloor$ and $\theta=N-sn^{*}$. Then, an optimal partition
$\displaystyle \left\{m_1,\ldots, m_J\right\}$ under policy $A$, i.e., an optimal solution in the equation \eqref{eq:Goal} is the following:
\begin{enumerate}
\item[(i)]
If\,\, $\theta=0$ then $\displaystyle m_1=\ldots=m_J=n^{*}$,
\item [(ii)]
If\,\, $1\leq \theta\leq s$ and $n^{*}$ in \eqref{eq:Gusev} is not unique, then an optimal partition
is $\displaystyle m_1,\ldots,m_s$ with $m_i=\lfloor n^{**}\rfloor$ or $m_i=\lceil n^{**}\rceil$ for all $i=1,\ldots,s$.
\item[(iii)]
Otherwise, an optimal partition is one of the following:
\begin{itemize}
\item[(a)]
Distribute $\theta$ among $s$ groups (with initial size $n^{*}$) in such a way that $\displaystyle |m_i-m_j|\leq 1$
for all $i,j \in \left\{1,\ldots,s\right\}$.
\item[(b)]
Build up an additional group (group s+1) by taking the reminder $\theta$ and units from the above $s$ groups (with initial size $n^{*}$)
in such way that $\displaystyle |m_i-m_j|\leq 1$
for all $i,j \in \left\{1,\ldots,s+1\right\}$.
\end{itemize}
\end{enumerate}
\end{thm}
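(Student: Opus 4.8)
The plan is to collapse the two‑dimensional optimisation in \eqref{eq:Goal} (over the number of groups and over their sizes) into a one‑dimensional search over the number of groups, and then to pin the optimal number of groups down to $\{s,s+1\}$. First I would invoke Proposition~\ref{re:Major}: for each fixed number of groups $I$ an optimal configuration is near‑equal, with all parts equal to $\lfloor N/I\rfloor$ or $\lceil N/I\rceil$. Calling its value $C_I$, the global optimum equals $\min_{1\le I\le N}C_I$, so it suffices to understand $C_I$ as a function of $I$. Rewriting each summand as a per‑item cost, $1/q^{n_i}=n_i\,\phi(n_i)$ with $\phi(n)=1/(nq^{n})=1/\mu(n,q)$, every configuration obeys
\[
\sum_i \frac{1}{q^{n_i}}=\sum_i n_i\,\phi(n_i)\ \ge\ \phi(n^{*})\sum_i n_i = N\,\phi(n^{*}),
\]
since by \eqref{eq:Gusev} the integer $n^{*}$ minimises $\phi$. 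Thus $N\phi(n^{*})$ is a universal lower bound, attained exactly when every part has minimal per‑item cost.

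Cases (i) and (ii) then fall out of this bound. If $\theta=0$ then $N=sn^{*}$, and the partition into $s$ parts all equal to $n^{*}$ attains $N\phi(n^{*})$, hence is optimal. If $n^{*}$ is non‑unique in \eqref{eq:Gusev} then $\phi(\lfloor n^{**}\rfloor)=\phi(\lceil n^{**}\rceil)$, so parts equal to $n^{*}$ or $n^{*}+1$ are both of minimal per‑item cost; when $1\le\theta\le s$ the partition with $\theta$ parts equal to $n^{*}+1$ and $s-\theta$ parts equal to $n^{*}$ sums to $N$ and again attains the bound. This settles (i) and (ii).

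For case (iii) the bound cannot be met, and the statement becomes a monotonicity claim for $C_I$. The guiding object I would use is the continuous surrogate $\widetilde C(I)=I\,q^{-N/I}$, coming from a uniform split into $I$ groups of size $N/I$. A direct computation gives
\[
\widetilde C''(I)=q^{-N/I}\,\big(\ln(1/q)\big)^{2}\,\frac{N^{2}}{I^{3}}>0,
\]
so $\widetilde C$ is strictly convex with unique minimiser $I^{*}=N/n^{**}$, and by the left inequality in Proposition~\ref{re:Major} one has $C_I\ge\widetilde C(I)$, with equality whenever $I\mid N$. The break‑point is exactly $s$: from $sn^{*}\le N<(s+1)n^{*}$ it follows that for $I\le s$ every near‑equal part is $\ge n^{*}$, whereas for $I\ge s+1$ every near‑equal part is $\le n^{*}$. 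Combining the convexity of $\widetilde C$ about $I^{*}$ with the fact that $n^{*}$ is the global minimiser of $\phi$, I would argue that $C_I$ is non‑increasing for $I\le s$ and non‑decreasing for $I\ge s+1$, so that the global minimum is $\min(C_{s},C_{s+1})$. The near‑equal partition for $I=s$ is candidate (a) and for $I=s+1$ is candidate (b), and selecting the smaller of the two proves (iii).

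The hard part will be the final transfer from the continuous to the integer picture: although $\widetilde C$ is convex, $C_I$ exceeds it by a rounding gap that varies with $I$, so convexity of $\widetilde C$ alone does not force unimodality of $C_I$; moreover $s=\lfloor N/n^{*}\rfloor$ is built from the integer $n^{*}$ rather than from $n^{**}$, so I must still verify that $s\le I^{*}\le s+1$ under the tie‑breaking convention $n^{*}=\lfloor n^{**}\rfloor$. I expect to dispatch both issues by proving the two one‑step inequalities $C_{I+1}\le C_I$ for $I<s$ and $C_{I+1}\ge C_I$ for $I\ge s+1$ directly from the closed form
\[
C_I=q^{-a}\Big[\,I+(N-Ia)\,\tfrac{1-q}{q}\,\Big],\qquad a=\Big\lfloor \tfrac{N}{I}\Big\rfloor,
\]
in which the sign of the one‑step difference is governed by the position of $n^{*}$ relative to $q/(1-q)$; it is precisely the hypothesis $q>1/2$ together with the choice $n^{*}=\lfloor n^{**}\rfloor$ that make these signs come out as required.
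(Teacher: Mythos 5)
Your treatment of cases (i) and (ii) is complete and correct, and in fact tighter than the paper's own: writing the cost as $\sum_i n_i\,\phi(n_i)$ with $\phi(n)=1/(nq^{n})$ and using that $n^{*}$ maximizes $nq^{n}$ over the integers gives the universal bound $N\phi(n^{*})$ valid across \emph{all} numbers of groups simultaneously, and the partitions prescribed in (i) and (ii) attain it (the paper disposes of these cases with one-line convexity remarks that only compare partitions with a fixed number of groups). For case (iii) the paper's proof also works with the per-item cost $f(x)=1/(xq^{x})$, but in a way that never compares adjacent values of $I$: since $f$ is non-increasing on integers $\le n^{*}$ and non-decreasing on integers $\ge n^{*}$, any near-equal partition into $I<s$ groups (all parts $\ge n^{*}$) is dominated outright by the near-equal $s$-group partition, and any partition into $I>s+1$ groups (all parts $\le n^{*}$) by the near-equal $(s+1)$-group partition.

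The gap in your proposal is precisely the part you defer: the inequalities $C_{I+1}\le C_I$ for $I<s$ and $C_{I+1}\ge C_I$ for $I\ge s+1$ are asserted (``I expect to dispatch''), not proved, and the sign analysis you sketch is only valid when $\lfloor N/(I+1)\rfloor=\lfloor N/I\rfloor$. In that case one indeed gets
\begin{equation*}
C_{I+1}-C_I=q^{-a-1}\bigl(q-a(1-q)\bigr),\qquad a=\Bigl\lfloor \tfrac{N}{I}\Bigr\rfloor,
\end{equation*}
and the sign follows from $q/(1-q)\le n^{*}\le q/(1-q)+1$, which holds for the integer maximizer under your tie-break. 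But when the floor drops between $I$ and $I+1$, the difference mixes distinct powers of $q$ — e.g.\ $N=100$, $I=9$ gives parts $\{11,11,\dots,12\}$ versus ten parts of $10$, so $C_{10}-C_9=q^{-12}(10q^{2}-8q-1)$ — and its sign is not read off from the position of $n^{*}$ relative to $q/(1-q)$; your closed form provides no mechanism here. The repair is simple but is a different argument: when the floors differ, every part of the coarser partition is at least as large as every part of the finer one, all parts lie on the same side of $n^{*}$, and monotonicity of $\phi$ yields $C_I\ge N\phi\bigl(\lfloor N/I\rfloor\bigr)\ge N\phi\bigl(\lceil N/(I+1)\rceil\bigr)\ge C_{I+1}$ for $I+1\le s$ (symmetrically for $I\ge s+1$); alternatively, abandon the one-step induction and compare each $I<s$ directly with $s$, and each $I>s+1$ with $s+1$, as the paper does. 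Until one of these is supplied, case (iii) — the heart of the theorem — is unproven. Your side worry about the surrogate is warranted, by the way: $s\le N/n^{**}\le s+1$ can genuinely fail (e.g.\ $n^{**}=9.6$, $N=1000$ gives $n^{*}=10$, $s=100$, $N/n^{**}\approx 104$), so convexity of $\widetilde C$ alone can never localize the optimum to $\{s,s+1\}$.
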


\begin{proof}
\begin{itemize}
\item[(i)]
Follows from the convexity of the function $\displaystyle 1/q^{x}$ as a continuous variable $x$ for the fixed $q$.
\item[(ii)]
Again, using the convexity we have
$$\displaystyle \frac{1}{q^{\lfloor n^{**}\rfloor}}+\frac{1}{q^{\lceil n^{**}\rceil}}<\frac{1}{q^{n_1}}+\frac{1}{q^{n_2}},$$
for any $n_1,n_2$ such that $\displaystyle n_1+n_2=\lfloor n^{**}\rfloor+\lceil n^{**}\rceil$ and $\left\{n_1, n_2\right\}\neq\left\{\lfloor n^{**}\rfloor,\lceil n^{**}\rceil\right\}$,
and the result follows.
\item[(iii)]
From Proposition \ref{re:Major} we know that under an optimal policy $A$, subgroup sizes differ at most by
one unit.
Consider case (a), i.e., $s$ groups with $s=u_1+u_2,$ where $u_1$ is the number of subgroups of size $n^{*}+t$, for some $t\geq 1$,
and $u_2$ is the number of subgroups of size $n^{*}+t+1$, such that $\displaystyle u_1(n^{*}+t)+u_2(n^{*}+t+1)=N$.
Suppose that we partition $N$ into a fewer subgroups $s_1$ such that $s_1<s$.  Suppose that $s_1=v_1+v_2,$ where $v_1$ is the number of subgroups of size $n^{*}+j$, for some $j>t\geq 0$,
and $v_2$ is the number of subgroups of size $n^{*}+j+1$, such that $\displaystyle v_1(n^{*}+j)+v_2(n^{*}+j+1)=N$.
Denote, $\displaystyle f(x)\equiv f(x, q)=\frac{1}{xq^x}$. For the fixed $q$ the function $f(x)$ is the convex function of the continuous variable $x$.
Therefore, we have
\begin{align}
\label{eq:In}
&
\frac{f(n^{*}+t)}{n^{*}+t}< \frac{f(n^{*}+t+1)}{n^{*}+t+1}< \frac{f(n^{*}+j)}{n^{*}+j}< \frac{f(n^{*}+j+1)}{n^{*}+j+1}.
\end{align}
From \eqref{eq:In} we get the following inequality
\begin{align*}
&
u_1f(n^{*}+t)+u_2f(n^{*}+t+1)< v_1f(n^{*}+j)+v_2f(n^{*}+j+1).
\end{align*}
Therefore, the partition into fewer than $s$ subgroups cannot be optimal.
Consider case (b): The similar arguments lead to the conclusion that partitioning into more than $s+1$ subgroups cannot be optimal.

\end{itemize}
\end{proof}
\section{Discussion}
In this work, we provide an optimal solution under policy $A$ for an incomplete identification problem.
It is a natural complement to the recent investigation by \cite{G2012}.
There is an interesting open question: Overall, is policy A an optimal policy for the incomplete identification problem in the sampling
from infinite population with finite demand $N$?\\
Also, it is important to note that we assume that parameter $p$ is known.
In many practical situations, the parameter is unknown or the limited knowledge is available, such as the upper bound or lower bound.
In this case, the Bayesian methodology from the complete identification case \citep{SG1966} or the minimax method \citep{MA2015} can be adopted.
Another possible direction for the investigation is to remove the assumption that the tests are error-free.
In this case, the expected total waiting time cannot be used as the only criterion for comparison among group-testing procedures
and additional criteria must be considered \citep{BSS2006, MAR2016}.
We do not attempt to investigate
erroneous tests in the current work and leave this direction for
future investigations.

\section*{Acknowledgement}
The author thanks the editor for his time and advice.
{}


\begin{thebibliography}{}

\bibitem[\protect\citeauthoryear{Bar-Lev \it{et~al.}}{1990}]{BP1990}
Bar-Lev, S.~K., Boneh, A., Perry, D. (1990).
\newblock Incomplete identification models for group-testable items.
\newblock {\emph Nav. Res. Logist.} {\textbf 37,} 647--659.


\bibitem[\protect\citeauthoryear{Bar-Lev \it{et~al.}}{1995}]{BPP1995}
Bar-Lev, S.~K., Parlar, M., Perry, D. (1995).
\newblock Optimal sequential decisions for incomplete identification of group-testable items.
\newblock {\emph Sequential Anal.} {\textbf 14,} 41--57.

\bibitem[\protect\citeauthoryear{Bar-Lev \it{et~al.}}{2006}]{BSS2006}
Bar-Lev, S.~K.,  Stadje, W., Van der Duyn Schouten, F.~A. (2006).
\newblock Group testing procedures with incomplete identification
and unreliable testing results.
\newblock {\emph  Applied Stochastic Models in Business and Industry} {\textbf 22,} 281--296.


\bibitem[\protect\citeauthoryear{Bar-Lev \it{et~al.}}{2017}]{B2016}
Bar-Lev, S.~K., Boxma, O., Kleiner, I., Perry, D. (2017).
\newblock Recycled incomplete identification procedures for blood screening.
\newblock {\emph Eur. J. Oper. Res.}, {\textbf 259,} 330--343.



\bibitem[\protect\citeauthoryear{Dorfman}{1943}]{Dorfman1943}
Dorfman, R. (1943).
\newblock The detection of defective members of large populations.
\newblock {\emph Ann. Math. Statist. } {\textbf 14,} 436--440.

\bibitem[\protect\citeauthoryear{Du and Hwang}{2006}]{Dh2006}
Du, D., Hwang, F. K. (2006).
\newblock Pooling Design and Nonadaptive Group Testing: Important Tools for DNA Sequencing. {\it World
Scientific, Singapore}.


\bibitem[\protect\citeauthoryear{Gilstein}{1985}]{G1985}
Gilstein, C. Z. (1985).
\newblock
Optimal partitions of finite populations for Dorfman-type group testing.
\newblock
{\it J. Stat. Plan. Inf.} {\textbf 12,} 385--394.

\bibitem[\protect\citeauthoryear{Gusev}{2012}]{G2012}
Gusev, A. L. (2012).
\newblock
The optimal number of items in a group for group testing.
\newblock
{\it  Statist. Probab. Lett.} {\textbf 82,} 2083--2085.

\bibitem[\protect\citeauthoryear{Hwang}{1976}]{H1976}
Hwang, F. K. (1976).
\newblock
An optimal nested procedure in binomial group testing.
\newblock
{\it Biometrics} {\textbf 32,} 939--943.

\bibitem[\protect\citeauthoryear{Malinovsky and Albert}{2015}]{MA2015} Malinovsky, Y., Albert, P. S. (2015).
A note on the minimax solution for the two-stage group testing problem.
{\it The American Statistician} {\textbf 69,} 45--52.

\bibitem[\protect\citeauthoryear{Malinovsky and Albert}{2017}]{MA2016} Malinovsky, Y., Albert, P. S. (2017).
Revisiting nested group testing procedures: new results, comparisons, and robustness.
{\it  The American Statistician}. In press.

\bibitem[\protect\citeauthoryear{Malinovsky \it{et~al.}}{2016}]{MAR2016} Malinovsky, Y., Albert, P. S., and Roy, A. (2016).
Reader Reaction: A Note on the Evaluation of Group Testing
Algorithms in the Presence of Misclassification.
{\it Biometrics} {\textbf 72,} 299--304.



\bibitem[\protect\citeauthoryear{Sobel and Groll}{1959}]{SG1959}
Sobel, M., Groll, P. A. (1959).
\newblock Group testing to eliminate efficiently all defectives in a binomial sample.
\newblock {\it Bell System Tech. J.} {\textbf 38,} 1179--1252.


\bibitem[\protect\citeauthoryear{Sobel and Groll}{1966}]{SG1966}
Sobel, M., Groll, P. A. (1966).
\newblock Binomial group-gesting with an unknown proportion of defectives.
\newblock {\it Technometrics} {\textbf 8,} 631--656.




\bibitem[\protect\citeauthoryear{Steele}{2004}]{S2004}
Steele, J. M. (2004).
\newblock The Cauchy-Schwarz master class.
\newblock  {\it Cambridge University Press}.

\bibitem[\protect\citeauthoryear{Sterrett}{1957}]{S1957}
Sterrett, A. (1957).
\newblock On the detection of defective members of large populations.
\newblock {\emph Ann. Math. Statist.} {\textbf 28,} 1033--1036.

\bibitem[\protect\citeauthoryear{Ungar}{1960}]{U1960}
Ungar, P. (1960). Cutoff points in group testing.
{\it Comm. Pure Appl. Math.} {\bf 13,} 49--54.


\bibitem[\protect\citeauthoryear{Wolf}{1985}]{W1985}
Wolf, J. (1985).
Born again group testing: multiaccess communications.
{\it IEEE Trans. Inf. Theory.} {\textbf 31} 185--191.


















\end{thebibliography}
\end{document}